\documentclass[10pt]{amsart} 
\usepackage{color} 

\newtheorem{Theorem}{Theorem}[section] 
\newtheorem{Lemma}[Theorem]{Lemma} 
\newtheorem{Cor}[Theorem]{Corollary}

\newtheorem{Prop}[Theorem]{Proposition}

\newcommand{\<}{\langle} 
\renewcommand{\>}{\rangle} 
\newcommand{\A}{A} 

\begin{document}  

\title[The {\v C}ern{\'y} conjecture]{The {\v C}ern{\'y} conjecture for automata \\ respecting intervals of a directed graph}  
\author{M. Grech and A. Kisielewicz}

\begin{abstract} The \v{C}ern\'{y}'s conjecture states that for every synchronizing automaton with $n$ states there exists a reset word of length not exceeding $(n-1)^2$.  We prove this conjecture for a class of automata preserving certain properties of intervals of a directed graph. Our result unifies and generalizes some earlier results obtained by other authors. 
\end{abstract} 

\maketitle

In this paper we consider finite (deterministic complete) automata $\A = \<Q,\Sigma,\delta \>$ with the state set $Q$, the 
input alphabet $\Sigma$, and the transition function $\delta : \; Q 
\times \Sigma \to Q$. The transition function defines the action of the letters in $\Sigma$ on $Q$, which, in this paper, is denoted simply by concatenation: $\delta(q,a) = qa$.  The action extends in a natural way to the words in $\Sigma^*$, and we use the same notation $qw = \delta(q,w)$. Accordingly, we write $Qw = \{qw : \; q\in Q\}$.

The automaton $\A$ is called \emph{synchronizing} if there exists a 
word $w\in\Sigma^*$ such that $|Qw|=1$ (in other words, $w$ \emph{resets} $\A$ sending all the states into one particular state). Such a word $w$ is called \emph{synchronizing} (or a \emph{reset} word)  for $\A$. The problem of synchronization is very natural and 
its various aspects are considered in the literature (see e.g. \cite{Epp, MS,Vol} for general information and further references). The most famous is the following conjecture due to \v{C}ern\'{y}.  
\bigskip 

\textbf{Conjecture} (Jan \v{C}ern\'{y} 1964, \cite{Cer}) \emph{If a deterministic finite $n$-state automaton $\A = \<Q,\Sigma,\delta \>$ is synchronizing, then it has a reset word of length $\leq (n-1)^2$. } 
\bigskip 

This conjecture is considered as one of the most longstanding open problems in the theory of finite automata. The consequent research includes verifying the conjecture for various classes of automata, establishing bounds for the length of reset words, investigating natural algorithmic and complexity questions, and many other related problems. For more detailed discussion we refer the reader to the most recent survey \cite{Vol} by Volkov. Here, we mention only the most important results proving the conjecture in special cases.  

In 1978, Pin \cite{Pin} proved the conjecture for \emph{circular} automata with a prime number of states (an automaton is \emph{circular} if it has a letter acting as a cyclic permutation of all the states). In 1990, Eppstein \cite{Epp} proved the conjecture for {orientable} automata (preserving a given cyclic ordering of all the states). In 1998, Dubuc \cite{Dub}, completing his earlier research, has verified the conjecture for all circular automata. In 2001, Kari \cite{Kar} verifies the conjecture for \emph{eulerian} automata (whose transition digraph is eulerian). In 2003, Ananichiev and Volkov \cite{AV} prove the conjecture for \emph{monotonic} automata (preserving a linear order of the states), and in 2005 they generalize their result to a broader class given by a certain multi-level construction \cite{AV1}. In 2007, Trahtman \cite{Tra1} demonstrates that the conjecture is true for \emph{aperiodic} automata (ones with the transition monoid having no nontrivial subgroups). In 2008, Almeida, Margolis, Steinberg, and Volkov \cite{AMSV} verify the conjecture for another class of automata related to the formal language theory: those with monoids belonging to \textbf{DS} class. In 2009, Volkov \cite{Vol1} proves the conjecture for the so-called \emph{weakly monotonic} automata, a certain strong generalization of generalized monotonic automata including the automata preserving a connected partial order. Most recently, in 2011, Steinberg \cite{Ste} verifies the conjecture for  automata having a letter inducing a connected digraph with the cycle of prime length (thus generalizing the mentioned Pin's result). In addition, various classes of small automata have been verified using computer programs. In  particular, Trahtman \cite{Tra}, in 2007, has announced checking all the automata on 2 letters with $n\leq 10$ states. 

Some of the above results involve assumptions on preserving by an automaton a certain structure. In this paper, we show that this kind of results may be unified and generalized. We introduce the notions of an \emph{interval} for a directed graph, \emph{respecting the intervals} of a digraph by an automaton, and the congruence induced by such a relation. Then, we prove a reduction theorem of the form that if an automaton $\A$ respects intervals of a directed graph and the induced quotient automaton satisfies the {\v C}ern{\'y} conjecture, then so does $\A$. Our result implies that the {\v C}ern{\'y} conjecture is true for a large class of automata that includes, in particular, orientable automata \cite{Epp}, monotonic and generalized monotonic automata \cite{AV,AV1}, aperiodic automata \cite{Tra1}, and weakly monotonic automata \cite{Vol1}.  

\section{Preliminaries}

There are two structures connected with an automaton $\A = \<Q,\Sigma,\delta \>$ giving  possibilities of viewing the automaton in various ways. The \emph{transition digraph} $T=T(\A)$ is defined on $Q$ by $(q,s)\in T$ if and only if $qa=s$ for some $a\in\Sigma$. The \emph{transition monoid} $M=M(\A)$ is one generated by transformations on $Q$ corresponding to letters in $\Sigma$. Then, the words in $\Sigma^*$ correspond to compositions of generating transformations. 

Automaton $\A = \<Q,\Sigma,\delta \>$ is \emph{strongly connected} if its transition digraph is strongly connected. In general, one can consider strongly connected components of $T(\A)$. These are partially ordered by the relation of the existence of a directed edge in $T(\A)$ from one component to another. It is clear that if $\A$ is synchronizing, then $T(\A)$ has to have a unique minimal strongly connected component $C$. In such a case, the restriction $\A|_{C}$ of $\A$ to $C$ forms a reduced automaton. It is easy to prove (and it is considered as a folklore result) that if $\A|_{C}$ satisfies the {\v C}ern{\'y} conjecture, then so does $\A$. Thus, it is enough to consider only strongly connected automata.  

In terms of monoids, this is just a reduction to transitive monoids, since it is clear that $\A$ is strongly connected if and only if $M(\A)$ is transitive. A further natural reduction  would be one to transition monoids which are \emph{primitive} in the sense of permutation group theory.  

Given  $\A = \<Q,\Sigma,\delta \>$, an equivalence relation $\sim$ on $Q$ is called a \emph{congruence} (on $\A$) if it is preserved by $\A$. More precisely, $q\sim s$ implies $qa\sim sa$ for all $a\in\Sigma$. If $\sim$ is a congruence, then the \emph{quotient automaton} $\A/\!\!\sim \; = \< Q', \Sigma, \delta'\>$ is defined in the natural way: $Q' = Q/\!\!\sim$ is the set of the equivalence classes of $\sim$, and $\delta([q]_{\sim},a)= [qa]_{\sim}$, where $[q]_{\sim}$ denotes the equivalence class containing $q$. One may conjecture that for all congruences the following equivalence holds: $\A$ satisfies the {\v C}ern{\'y} conjecture if and only if the quotient automaton does. If the {\v C}ern{\'y} conjecture is true then, of course, this equivalence is also true. Yet, we see no natural way to prove it, and according to our experience, this problem in general may be as hard as the {\v C}ern{\'y} conjecture itself. Yet, we have attempted to find special kinds of congruences, for which the problem would be more naturally linked to the {\v C}ern{\'y} conjecture and a reduction step would be possible. Here we have one easy and natural result.  

\begin{Prop}\label{quotient1} 
Let $\sim$ be a congruence on a strongly connected automaton  $\A = \<Q,\Sigma,\delta \>$ such that one of its equivalence class is a singleton. Then, $\A$ satisfies the {\v C}ern{\'y} conjecture, whenever $\A/\!\!\sim$ does.  
\end{Prop}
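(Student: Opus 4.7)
My plan is to construct a reset word for $\A$ in two stages: first collapse the whole state set $Q$ into a single equivalence class of $\sim$ using the (hypothetical) short reset word for the quotient, and then exploit the singleton class to finish the job with a short ``detour'' word in $\A/\!\!\sim$.

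Let $n = |Q|$ and let $k$ be the number of equivalence classes of $\sim$, with the distinguished singleton class written $\{q_0\}$. If $k = n$ then $\sim$ is equality and $\A$ coincides with $\A/\!\!\sim$, so there is nothing to prove; hence I may assume $k \leq n-1$. I would also dispose quickly of the other trivial direction: any reset word for $\A$ projects to a reset word for $\A/\!\!\sim$, so if $\A$ is synchronizing then so is $\A/\!\!\sim$ (otherwise the statement is vacuous). By hypothesis, $\A/\!\!\sim$ then admits a reset word $u$ with $|u| \leq (k-1)^2$, and applying $u$ in $\A$ yields $Qu \subseteq C$ for some equivalence class $C$.

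The crucial observation is that, because $\{q_0\}$ is a singleton, landing in it at the quotient level forces landing on $q_0$ itself in $\A$. Using strong connectedness of $\A$ (which transfers to $\A/\!\!\sim$), I would pick $v$ of length at most $k-1$ whose action on the quotient corresponds to a shortest directed path from $C$ to $\{q_0\}$ in the transition digraph of $\A/\!\!\sim$. Then $Cv \subseteq \{q_0\}$ inside $\A$, so $Quv = \{q_0\}$ and $uv$ is a reset word for $\A$.

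Finally I would verify the length bound: $|uv| \leq (k-1)^2 + (k-1) = (k-1)k$, and the elementary inequality $k \leq n-1$ gives $(k-1)k \leq (n-2)(n-1) \leq (n-1)^2$, as required. I do not anticipate any real obstacle here; the singleton hypothesis does essentially all the work, converting ``quotient reset plus short path'' into a genuine reset of $\A$. The only point to watch is the arithmetic verifying that the extra factor $k-1$ picked up by the detour $v$ still fits inside the \v{C}ern\'y bound $(n-1)^2$, which relies precisely on $k$ being strictly less than $n$ in the nontrivial case.
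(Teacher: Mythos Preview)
Your proposal is correct and follows essentially the same two-stage strategy as the paper: reset the quotient with a word of length at most $(k-1)^2$, then use a word of length at most $k-1$ to move into the singleton class, giving total length $(k-1)k \leq (n-1)^2$. The only cosmetic difference is that you find the second word directly in the strongly connected quotient $\A/\!\!\sim$, whereas the paper finds a shortest path in $\A$ and argues via the congruence that it visits each $\sim$-class at most once; the two arguments are equivalent.
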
 

\begin{proof} 
Let $|Q|=n$ and $k$ be the number of the equivalence classes in $\sim$. We may assume that $k<n$, since otherwise the result is trivial. By assumption, there exists a word $w\in\Sigma^*$ of length $\leq (k-1)^2$ resetting $\A/\!\!\sim$. This word sends all the states of $\A$ into one $\sim$-class $B$.  

Let $s$ be the unique element of a singleton $\sim$-class. Since $\A$ is strongly connected, there exists a word $u\in\Sigma^*$ that sends a fixed state $q\in B$ into $s$. Moreover, we may assume that the length $|u| \leq k-1$, since if  $u = a_1\ldots a_t$ is the shortest word with $qu=s$, then the path corresponding to applying successive letters of $w$ visits every $\sim$-class at most once. Indeed, suppose to the contrary that  $qa_1\ldots a_i$ and  $qa_1\ldots a_j$ are in the same $\sim$-class, and $i < j < t$. Then, since $\sim$ is a congruence, $qa_1\ldots a_{i}a_{j+1}j\ldots a_t = s$, a contradiction.

It follows that the word $u$ sends all the states of the $\sim$-class $B$ into $s$, and therefore $uw$ resets $\A$. The length $|uw| \leq (k-1)^2 + k-1  < (n-1)^2,$ as required.  
\end{proof} 

In the sequel, the argument used in this proof will be treated as routine.

\section{Intervals in digraphs} 

Let $D=\<Q,E\>$ be a directed graph. We say that a vertex $z\in D$ \emph{lies on a directed path} from $x$ to $y$ (in $D$) if there is a directed path $x=x_0, x_1, \ldots, x_n=y$ such that $z=x_i$ for some $i=0,1,\ldots,n$ and $x,y \notin \{x_1,\ldots,x_{n-1}\}$. In short, we say that the \emph{path $xy$ contains $z$}. Note that the path in question may have repeated occurrences of vertices except for endvertices $x,y$ each of which occurs only once. In the sequel, we assume that all considered paths have this property. Only if necessary, we stress this calling such paths \emph{singular}. 

The set of all vertices $z$ lying on a directed path between $x$ and $y$ will be denoted by $[x,y]$. Such a set is called a (directed) \emph{interval} of $D$. Note, that by definition, if $[x,y] \ne \emptyset$, then $x,y\in[x,y]$. Note also that due to the singularity assumption on paths, even for strongly connected digraphs (which are of special interest for us) the intervals $[x,y]$ and $[y,x]$ usually differs from each other. For example, if $D$ is a simple directed cycle, then the intervals correspond to a division of the cycle into two arcs. If $D$ is acyclic, then one of the intervals $[x,y]$ and $[y,x]$ is empty, and the other one is the usual interval in the induced partial ordering (the transitive closure of $D$). The interval $[x,x]$, by definition, is the strongly connected component of the vertex $x$ in $D$. In the sequel, we shall use the abbreviation \emph{sc-component}.

We say that an automaton $\A = \<Q,\Sigma,\delta\>$ \emph{respects the intervals} of a digraph $D$ defined on $Q$, if for every letter $a\in\Sigma$, and all $x,y\in Q$, the following conditions holds: 
\begin{itemize} 
\item[(i)] if $[x,y]\neq \emptyset$, then $[xa,ya]\neq \emptyset,$ 
\item[(ii)] if both $[x,y]\neq \emptyset$ and $[y,x]\neq \emptyset$, then $[x,y]a \subseteq [xa,ya],$ 
\item[(iii)] if $xa=ya$, then one of the sets $[x,y]a$ or $[y,x]a$ has at most one element.  
\end{itemize} 

The first condition means that $\A$ preserves existence of a directed path between vertices.   
The second condition means that in case when there are directed paths between $x$ and $y$ in both directions, $z\in[x,y]$ implies $za\in[xa,ya]$, which means that the relation of lying on a directed path between vertices is preserved. The third condition deals with the special case when the endpoints of the interval are mapped into the same vertex. What is happening with the two intervals $[x,y]$ and $[y,x]$ in such a situation? We assume a sort of continuity: when $x$ approaches $y$ then one of intervals is getting smaller while the other is getting larger.  

The first condition means, in particular, that $\A$ preserves both the partition into weakly connected components of $D$ and the partition into strongly connected components of $D$. Thus the corresponding equivalence relations are congruences, which will be referred to as congruences \emph{induced} by the weakly (or strongly) connected components of $D$, respectively.

\begin{Lemma}\label{cond} 
If $\A = \<Q,\Sigma,\delta\>$ respects the intervals of a digraph $D=\<Q,E\>$, then the conditions \mbox{\rm (i-iii)} above hold for every word $w$, as well $($that is, with $a$  replaced by $w$, and $\Sigma$ replaced with $\Sigma^*).$  
\end{Lemma}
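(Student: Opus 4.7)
The natural route is induction on $|w|$. The base case $|w|=1$ is exactly the hypothesis, and $|w|=0$ is trivial since the empty word acts as the identity. For the inductive step I would write $w = va$ with $v\in\Sigma^*$ of shorter length and $a\in\Sigma$, and verify (i), (ii), (iii) separately, each time combining the inductive hypothesis applied to $v$ with the letter-version of the same condition applied to $a$.

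Condition (i) is immediate: if $[x,y]\neq\emptyset$, then by IH $[xv,yv]\neq\emptyset$, and by hypothesis on $a$, $[xva,yva]=[xw,yw]\neq\emptyset$. Condition (ii) is a short chain of inclusions. Assuming $[x,y]$ and $[y,x]$ are both nonempty, IH (i) gives that $[xv,yv]$ and $[yv,xv]$ are both nonempty, so hypothesis (ii) for $a$ applies and yields $[xv,yv]a\subseteq[xva,yva]$; combining with IH (ii), $[x,y]w = ([x,y]v)a \subseteq [xv,yv]a \subseteq [xva,yva] = [xw,yw]$.

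The delicate part, which I expect to be the main obstacle, is (iii). Suppose $xw=yw$, i.e.\ $xva = yva$, and distinguish two cases. If $xv = yv$, then IH (iii) applied to $v$ produces one of $[x,y]v$ or $[y,x]v$ with at most one element, and applying the function $a$ to such a set keeps it of size at most one, finishing this case. If $xv\neq yv$ but $xva = yva$, then hypothesis (iii) for $a$ says one of $[xv,yv]a$, $[yv,xv]a$ has at most one element; say it is $[xv,yv]a$. If $[x,y]=\emptyset$ there is nothing to prove, so assume $[x,y]\neq\emptyset$; if in addition $[y,x]=\emptyset$ we are also done, so suppose both are nonempty. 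Then IH (ii) gives $[x,y]v\subseteq[xv,yv]$, whence $[x,y]w = [x,y]va \subseteq [xv,yv]a$, which has at most one element; the symmetric subcase is identical.

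The only subtlety I anticipate is being careful in (iii) not to invoke IH (ii) before checking that the relevant intervals are nonempty, since otherwise the inclusion $[x,y]v\subseteq[xv,yv]$ is not guaranteed by the hypothesis — but the empty cases are handled trivially at the outset, so this causes no trouble.
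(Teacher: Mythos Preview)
Your proof is correct and follows essentially the same approach as the paper's: iterate the letter-conditions along the word, and for (iii) separate out the trivial case where one of $[x,y]$, $[y,x]$ is empty, then use (ii) (for words) to embed $[x,y]w$ inside $[xu,yu]a$ at the moment the endpoints merge. The only cosmetic difference is that the paper organizes this without an explicit induction, first establishing (ii) for words and then, for (iii), directly locating the longest prefix $u$ with $xu\neq yu$; your case split on whether $xv=yv$ recovers exactly this prefix recursively.
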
 

\begin{proof} 
For the condition (i) the claim is obvious. For (ii), we note that the condition implies that both $[xa,ya]$ and $[ya,xa]$ are nonempty. The claim for $w=a_1\ldots a_n$ follows by successive application of (ii) for letters $a_1,\ldots,a_n$. For (iii), if one of $[x,y]$ or $[y,x]$ is empty, then one of $[x,y]w$ or $[y,x]w$ is empty, as well, and thus satisfies (iii). Otherwise, successive application of (ii), yields that $[x,y]u \subseteq [xu,yu]$ for every prefix $u$ of $w$. Taking $u$ to be the largest prefix of $w$ such that $xu\neq yu$, and applying (iii) for $xu$ and $yu$, yields the required result for $w$.  
\end{proof} 

As an example, we leave to the reader to check that, if $D$ is a simple directed cycle, then respecting intervals means preserving an orientation of the cycle, while if $D$ is acyclic, then respecting intervals means preserving the partial order induced by $D$. We show that these are in fact the extreme cases of a general situation.  

Under assumptions as above, let $\sim$ denote the equivalence relation on $Q$ induced by sc-components of $D$, and let $Q' = Q/\!\!\sim$ be the quotient set. Then the relation induced on $Q'$ by $D$ is an acyclic digraph, which we denote by $D'= D/\!\!\sim$. The quotient automaton $\A/\!\!\sim$ respects the intervals of $D'$, that is, in view of the remark above, it preserves the partial order induced by $D'$. The property of respecting the intervals of a digraph may be now characterized as follows. 

\begin{Lemma} 
An automaton $\A = \< Q,\Sigma,\delta \>$ respects the intervals of a digraph $D$ on $Q$ if and only if  
$\A/\!\!\sim$ preserves the partial order induced by $D/\!\!\sim$ and for all $x,y$ belonging to the same sc-component of $D$ the conditions \mbox{\rm (ii)} and \mbox{\rm (iii)} above holds.  
\end{Lemma}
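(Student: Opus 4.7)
The plan is to exploit the fact that the three conditions in the definition of respecting intervals behave very differently depending on whether $x,y$ lie in the same sc-component of $D$. The key bookkeeping observation I would start from is that, for distinct $x,y$, we have $[x,y]\neq\emptyset$ precisely when there is a directed path from $x$ to $y$, which in turn corresponds exactly to $[x]_{\sim}\le [y]_{\sim}$ in the partial order on $D'=D/\!\!\sim$; and that $[x,y]$ and $[y,x]$ are \emph{both} nonempty if and only if $x\sim y$ (together with the trivial diagonal case $x=y$, which falls into the same sc-component by definition of $[x,x]$). Since $D'$ is acyclic, $\A/\!\!\sim$ respecting the intervals of $D'$ reduces to preserving the induced partial order.

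For the forward implication I would argue as follows. Condition (i) applied in $\A$ says that $[x,y]\neq\emptyset$ implies $[xa,ya]\neq\emptyset$, which under the correspondence above is exactly the statement that $\A/\!\!\sim$ preserves the partial order of $D'$. Conditions (ii) and (iii) are stated for arbitrary $x,y$, so they in particular hold for any pair inside the same sc-component; this is the trivial half.

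For the converse I would check (i), (ii), (iii) one by one under the new hypothesis. For (i), if $[x,y]\neq\emptyset$ then $[x]_{\sim}\le [y]_{\sim}$ in $D'$; by assumption $\A/\!\!\sim$ preserves $\le$, giving $[xa]_{\sim}\le[ya]_{\sim}$, which translates back to $[xa,ya]\neq\emptyset$. For (ii) one observes that the hypothesis $[x,y],[y,x]\neq\emptyset$ forces $x\sim y$, so the condition is covered by the assumption. For (iii), suppose $xa=ya$. If $x\sim y$, we again invoke the assumption. If $x\not\sim y$, then by the initial observation at least one of $[x,y]$ and $[y,x]$ is empty, so its image under $a$ is empty and (iii) is satisfied trivially.

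No step requires a real estimate or construction; the only thing one has to be careful about is the exact correspondence between nonemptiness of intervals in $D$ and the partial order on $D'$, including the degenerate case $x=y$ where $[x,x]$ equals the sc-component rather than being witnessed by a nontrivial singular path. Once this dictionary is stated, the proof is a matter of case analysis on whether $x$ and $y$ lie in the same sc-component, and the main obstacle is simply making sure that the partial-order preservation of $\A/\!\!\sim$ and the in-component version of (ii) and (iii) together recover all three conditions globally.
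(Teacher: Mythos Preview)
Your argument is correct and is exactly the unpacking the authors have in mind; the paper states this lemma without proof, treating it as an immediate consequence of the remarks preceding it (that condition~(i) is equivalent to preservation of the partial order on $D/\!\!\sim$, and that both $[x,y]$ and $[y,x]$ are nonempty precisely when $x,y$ share an sc-component). One small point worth making explicit in your write-up: the converse hypothesis presupposes that $\A/\!\!\sim$ is a well-defined automaton, i.e.\ that $\sim$ is a congruence; this actually follows already from condition~(ii) restricted to same-component pairs, since $x\sim y$ gives $[x,y],[y,x]\neq\emptyset$, whence $[x,y]a\subseteq[xa,ya]$ forces $[xa,ya]\neq\emptyset$ and symmetrically $[ya,xa]\neq\emptyset$, so $xa\sim ya$.
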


We say that the strongly connected components of a digraph $D$ are \emph{dense}, or that $D$ is \emph{scc-dense}, if for all $x,y,z$ belonging to the same sc-component $C$ of $D$ either $z\in[x,y]$ or $z\in[y,x]$. Note that a simple directed cycle is dense in that sense. The property is not as strong as it may seem at the first sight. Let us call a vertex $z\in C$ a \emph{check-point} for the pair $x,y\in C$ ($x,y\neq z$) if each directed path from $x$ to $y$ contains $z$ and each directed path from $y$ to $z$ contains $z$, as well. 

\begin{Lemma} 
A digraph $D$ is scc-dense if and only if contains no check-point in any sc-component of $D$.  
\end{Lemma}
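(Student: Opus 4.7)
The plan is to prove both implications by contraposition, using the following reformulation as the main tool: $z \notin [x,y]$ if and only if either every walk from $x$ to $z$ visits $y$, or every walk from $z$ to $y$ visits $x$. The nontrivial direction is the converse: given a simple $x \to z$ walk $\alpha$ avoiding $y$ and a simple $z \to y$ walk $\beta$ avoiding $x$, the concatenation $\alpha\beta$ visits $z$ and is a singular $x \to y$ path, because the endpoint $x$ appears only at the start of $\alpha$ and is absent from $\beta$, and symmetrically for $y$; hence $z \in [x, y]$.

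For the direction ``a check-point exists, therefore $D$ is not scc-dense'', suppose $z$ is a check-point for the pair $(x, y)$ within an sc-component $C$. I claim the triple $(x, z, y) \in C^3$ violates scc-density, that is, $y \notin [x, z] \cup [z, x]$. If a singular $x \to z$ path had $y$ as an internal vertex, its initial segment from $x$ to $y$ would be a walk avoiding $z$ (since $z$, being the endpoint of the whole path, cannot appear internally), contradicting the check-point hypothesis that every $x \to y$ walk contains $z$. The argument for $y \notin [z, x]$ is symmetric.

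For the converse, suppose $D$ is not scc-dense, so there exist $x, y, z$ in a single sc-component $C$ with $z \notin [x, y] \cup [y, x]$; these three vertices are necessarily distinct, because $[u, v]$ contains both $u$ and $v$ whenever nonempty, and strong connectedness guarantees nonemptiness. Applying the reformulation to both non-memberships produces four cases. Two of them state exactly that $y$ is a check-point for $(x, z)$, or that $x$ is a check-point for $(y, z)$, finishing the proof in those cases. The remaining two cases, for instance ``every $x \to z$ walk uses $y$ and every $y \to z$ walk uses $x$'', are impossible: take a simple $x \to z$ path; by the first hypothesis $y$ appears internally on it; the tail from $y$ to $z$ is then a simple walk from $y$ to $z$, which by the second hypothesis must contain $x$; but $x$ occurs only at the start of the original simple path, a contradiction.

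The main technical point to handle carefully is the concatenation step in the reformulation, where one must verify that the assembled path is singular in the paper's sense (endpoints not appearing as internal vertices). Once that is in place, the rest is a finite case analysis keyed to which of the three vertices $x, y, z$ plays the role of check-point, and the main obstacle is essentially bookkeeping rather than any deep combinatorial idea.
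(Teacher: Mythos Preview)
Your proof is correct and follows essentially the same route as the paper's. Both arguments hinge on the same reformulation---that $z\notin[x,y]$ forces either every $x\to z$ path to pass through $y$ or every $z\to y$ path to pass through $x$---and then carry out the same four-case analysis, with two cases producing a check-point and two being ruled out by the concatenation/truncation trick; your version simply makes the reformulation explicit up front and spells out the singularity check for the concatenated path, which the paper leaves implicit.
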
 

\begin{proof} 
Suppose first $D$ has a check-point $z$ in an sc-component $C$. So, there are $x,y\in C$ such that any directed path between $x$ and $y$ goes through $z$. It follows that $y\notin [z,x]$ and $y\notin [x,z]$.  

Conversely, suppose that $z\notin[x,y]$. Then, one of the following holds: either  
(a) every path $xz$ contains $y$, or (b) every path $zy$ contains $x$.  
 
Similarly, supposing that $z\notin[y,x]$, we have that either  
  (c) every path $yz$ contains $x$, or   (d) every path $zx$ contains $y$.  
    
   Now, (a) and (c) contradicts the fact that there is a path form $x$ to $z$, while (a) and (d) means that $y$ is a check-point. The situation for the remaining two possibilities is similar. Consequently, there is always a check-point, as required. 
\end{proof}

Now we prove a crucial property of the automata respecting the intervals of an scc-dense digraph connected with synchronization. We will be interested in the situation when the image $Xw$ of the subset $X$ of $Q$ under the word $w$ \emph{collapses}, by which we mean that $|Xw|=1$.  

\begin{Lemma}\label{crucial} 
Let $\A = \< Q,\Sigma,\delta \>$ respects the intervals of an scc-dense digraph $D$ on $Q$. 
If $X$ is a nonempty subset of $Q$, $|X|>1$, contained in an sc-component of $D$ such that $|Xw|=1$, then there exists different $x,y\in X$ such that $X\subseteq [x,y]$ and the whole $[x,y]$ collapses under $w$.  
\end{Lemma}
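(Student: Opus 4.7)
The plan is to prove the statement by induction on $|X|$. The base case $|X|=2$ is immediate: writing $X=\{x,y\}$, we have $xw=yw$, so applying condition (iii) to $w$ (as justified by Lemma~\ref{cond}) makes one of $[x,y]w,[y,x]w$ a singleton, and $X$ sits inside the chosen interval as its pair of endpoints.

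For the inductive step with $|X|\ge 3$, I would pick any $z\in X$ and invoke the induction hypothesis on $X\setminus\{z\}$ to obtain distinct $p,q\in X\setminus\{z\}$ with $X\setminus\{z\}\subseteq[p,q]$ and $[p,q]w$ a singleton. If $z\in[p,q]$ the pair $(p,q)$ already works. Otherwise scc-density places $z\in[q,p]\setminus[p,q]$, and the key step is to show that both ``extensions'' $(p,z)$ and $(z,q)$ lie in $X\times X$ with $X$ contained in the corresponding interval. The first ingredient is that $q\in[p,z]$ (and, symmetrically, $p\in[z,q]$): otherwise the dichotomy used in the paper's proof of the no-check-point characterization of scc-density would force either every $p\to q$ path to contain $z$ (impossible since $z\notin[p,q]$) or every $q\to z$ path to contain $p$ (contradicted by the $q\to z$ prefix of the singular $q\to z\to p$ witnessing $z\in[q,p]$, which avoids $p$). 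The second ingredient is that for every other $x'\in X$, concatenating a singular $p\to x'\to q$ witnessing $x'\in[p,q]$ with that same $q\to z$ prefix produces a singular $p\to z$ path through $x'$, because $z$ cannot occur in the interior of any $p\to q$ singular path; the mirror construction shows $x'\in[z,q]$.

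Now condition (iii) applied to $w$, via Lemma~\ref{cond}, on each of $(p,z)$ and $(z,q)$ collapses one of the two intervals in each pair. If $|[p,z]w|=1$ or $|[z,q]w|=1$ the proof is complete. The residual case is $|[z,p]w|=|[q,z]w|=1$; together with $|[p,q]w|=1$ from the induction hypothesis, the three intervals $[p,q],[q,z],[z,p]$ all map under $w$ into the singleton $\{*\}:=Xw$.

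The main obstacle, and the decisive step, will be the covering claim $[p,q]\cup[q,z]\cup[z,p]=C$, where $C$ is the sc-component containing $X$. Given $v\in C$, three applications of scc-density on the pairs $(p,q),(q,z),(z,p)$ either place $v$ in one of the three forward intervals (done) or force $v\in[q,p]\cap[z,q]\cap[p,z]$, yielding singular paths $q\to v\to p$, $p\to v\to z$, $z\to v\to q$. A careful case analysis of where $p$ or $q$ can occur in the natural prefix/suffix concatenations, combined with the hypothesis $z\notin[p,q]$, extracts either a singular $p\to q$ through $v$, or a singular $q\to z$ through $v$, or a singular $z\to p$ through $v$---each contradicting one of $v\notin[p,q],[q,z],[z,p]$. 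Granting the covering claim, $Cw\subseteq\{*\}$, forcing $[p,z]w=\{*\}$ and contradicting the residual-case assumption, which closes the induction.
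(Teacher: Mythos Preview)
Your proof is correct and shares the paper's inductive skeleton (base case, removal of a point $z$, the inclusions $[p,q]\subseteq[p,z]$ and $[p,q]\subseteq[z,q]$, and the split via condition~(iii) into the ``easy'' and ``residual'' cases). The genuine divergence is in how the residual case $|[z,p]w|=|[q,z]w|=|[p,q]w|=1$ is closed.

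The paper does \emph{not} argue via a covering of $C$. Instead it picks explicit witnesses $t\in[p,z]$ and $s\in[z,q]$ with $tw\neq zw$, $sw\neq zw$ and, using the three collapsing intervals $[p,q],[q,z],[z,p]$ to rule out unwanted occurrences of $p$ or $q$ on candidate paths, manufactures a singular $p\to z$ path through $t$ avoiding $q$ and a singular $z\to q$ path through $s$ avoiding $p$; their concatenation is a singular $p\to q$ path through $z$, contradicting $z\notin[p,q]$. So the paper's contradiction is $z\in[p,q]$, obtained by a path-surgery argument that leans on the automaton (the values $tw,sw$) throughout.

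Your route is purely graph-theoretic at this stage: the covering claim $[p,q]\cup[q,z]\cup[z,p]=C$, from which the collapse of all of $C$ (hence of $[p,z]$) follows immediately. This is a clean standalone lemma; in fact it holds for \emph{any} three distinct vertices of an scc-dense sc-component, so the hypothesis $z\notin[p,q]$ you invoke for it is unnecessary. A tidy proof: from $v\in[q,p]\cap[z,q]\cap[p,z]$ one extracts six singular subpaths ($p\to v$ avoiding $z$, $v\to z$ avoiding $p$, etc.); the dichotomy for each of $v\notin[p,q],[q,z],[z,p]$ yields six ``every \dots\ contains \dots'' alternatives $A_1,A_2,B_1,B_2,C_1,C_2$; applying each alternative to the appropriate subpath shows the pairs $\{A_1,B_1\},\{B_1,C_1\},\{C_1,A_1\}$ and $\{A_2,C_2\},\{C_2,B_2\},\{B_2,A_2\}$ are each mutually exclusive, so at most one $X_1$ and at most one $X_2$ can hold---too few to satisfy all three disjunctions $(A_1\vee A_2),(B_1\vee B_2),(C_1\vee C_2)$. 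Your sketch (``a careful case analysis \dots\ combined with $z\notin[p,q]$'') is a bit vague and slightly misdirected, but the claim itself is sound, so the overall argument goes through.
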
 
\begin{proof} 
The proof is by induction on the cardinality of $X$. For $|X| = 2$ the claim follows trivially from the condition~(iii) of Lemma~\ref{cond}. Otherwise, let us fix $z\in X$. By induction assumption, there are different $x, y \in X \setminus \{z\}$ such that $X \setminus \{z\} \subseteq [x, y]$ and  $[x, y]$ collapses under $w$. If $z\in [x, y]$, then we are done. 
So, we may assume $z\notin[x,y]$. 

Since $D$ is scc-dense, it follows that $z\in[y,x]$.  
This means that there is a path $yx$ containing $z$, and consequently, there is a path $yz$ containing no $x$. This path may be added to every path $xy$ yielding a path from $x$ to $z$. It follows that $[x,y] \subseteq [x,z]$. Similarly (since there is a path $zx$ containing no $y$),$[x,y] \subseteq [z,y]$.  

If any of $[x,z]$ or $[z,y]$ collapses under $w$, we are done. Hence, we may assume, by condition (iii) of Lemma~\ref{cond}, that both $[z,x]$ and $[y,z]$ collapse under $w$, that is, 

$$|[z,x]w|=|[y,z]w| = 1. $$   

We may assume also that there are $t\in [x,z]$, such that $tw\neq zw=xw$, and  
$s\in [z,y]$, such that $sw\neq zw = yw$.  

We show that among paths $xz$ containing $t$ there exists one not containing $y$. Suppose to the contrary that each path $xz$ containing $t$ contains $y$, as well. If the segment $xt$ of such a path does not contain $y$, then neither does the segment $tz$ (otherwise, we would have a path $xy$ containing $t$, and since by induction hypothesis $|[x,y]w| = 1$, it follows that $tw=xw$, a contradiction). Similarly, if the segment $tz$ of such a path does not contain $y$, then neither does the segment $xt$ (otherwise, we would have a path $yz$ containing $t$, and since, by assumption above, $[y,z]$ collapses, we have $tw=zw$, a contradiction). It follows that every path $xt$ contains $y$, and every path $tz$ contains $y$. By singularity, there is a path $zt$ not containing $y$, and therefore not containing $x$ either. Similarly, there is a path $tx$ not containing $y$, and therefore not containing $z$ either. This yields a (singular) path $zx$ containing $t$, which (since $[z,x]$ collapses) yields $tw=tz$, a contradiction.  

In a completely analogous way we show that among paths $zy$ containing $s$ there exists one not containing $x$. Combining it with a path $xz$ containing $t$ and not containing $y$, we obtain a (singular) path $xy$ containing $z$, which contradicts the assumption that $z\notin[x,y]$, and thus completes the proof. 
\end{proof}

\section{Main result}

We start from a preliminary result. 
Given a synchronizing automaton $\A = \<Q,\Sigma,\delta \>$ and a subset $C$ of $Q$.  
A family $\mathcal F=\{X_1,\ldots, X_m\}$ of subsets of $Q$ is called a \emph{{\v C}ern{\'y} family} for $C$, if for every image $Y=Cu$ by a word $u\in\Sigma^*$ and every $w\in \Sigma^*$, if $|Y|>1$ and $|Yw|=1$ then there exists $i\leq m$ such that $Y\subseteq X_i$ and $|X_iw|=1$. 
In other words, if $C$ collapses under $uw$, then the image $Cu$, if nontrivial, may be extended to a member of $\mathcal F$ that still collapses under $w$. 

\begin{Lemma} \label{Cfamily}
If $\mathcal F=\{X_1,\ldots, X_m\}$ is a {\v C}ern{\'y} family for a set $C\subset Q$, $|C|>1$, then there exists $w\in \Sigma^*$ of length not exceeding  $m =|\mathcal F|$ such that $|Cw|=1$. 
\end{Lemma}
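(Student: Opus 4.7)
My plan is to take a shortest word $v$ that collapses $C$, track the images of $C$ along its prefixes, and use the \v{C}ern\'{y} family to assign an index to each non-trivial image; minimality of $v$ will force the assignment to be injective.

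More precisely, since $\A$ is synchronizing there exists a word $v = a_1 \cdots a_n \in \Sigma^*$ with $|Cv| = 1$; I pick one of minimal length $n$. For $i = 0, 1, \ldots, n-1$ set $C_i = C a_1 \cdots a_i$, so $C_0 = C$ and by minimality $|C_i| > 1$ for all such $i$. Applying the defining property of the \v{C}ern\'{y} family $\mathcal F$ with $u = a_1 \cdots a_i$, $Y = C_i$ and $w = a_{i+1} \cdots a_n$ (which collapses $Y$), I get an index $f(i) \in \{1, \ldots, m\}$ such that $C_i \subseteq X_{f(i)}$ and $|X_{f(i)} a_{i+1} \cdots a_n| = 1$. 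The goal reduces to proving $n \le m$, which will follow at once from injectivity of $f$.

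The main step — really the only non-bookkeeping part of the proof — is establishing that $f$ is injective. Suppose for contradiction that $f(i) = f(j) = k$ with $i < j < n$. Then the word $w' = a_1 \cdots a_i a_{j+1} \cdots a_n$, of length $i + (n - j) = n - (j - i) < n$, satisfies
$$ C w' = C_i \, a_{j+1} \cdots a_n \subseteq X_k \, a_{j+1} \cdots a_n, $$
and the right-hand set has cardinality $1$ by the choice of $f(j) = k$. Hence $|Cw'| = 1$, contradicting the minimality of $n = |v|$. Thus $f$ is injective, so $n \le m$, and $w = v$ is the required word of length at most $m$.

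The argument is essentially a pumping argument of the same flavour as the one used in the proof of Proposition~\ref{quotient1}: any repetition in the index sequence $f(0), \ldots, f(n-1)$ yields a shorter collapsing word by splicing out the segment between the two occurrences. I do not anticipate any real obstacle; the slightly delicate point is just making sure that $X_k$ (rather than merely $C_i$ or $C_j$) is what one splices through, since $X_k$ is the object on which the collapse of the tail is guaranteed.
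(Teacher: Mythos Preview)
Your proof is correct and follows essentially the same approach as the paper: take a shortest word collapsing $C$, assign to each nontrivial prefix image a member of $\mathcal F$ via the defining property, and use minimality to show this assignment is injective by splicing out the segment between any repeated indices. The paper's argument is identical in structure, differing only in notation.
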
 

\begin{proof} 
Since $\A$ is synchronizing, there is a word $v=a_1\cdots a_t$ such that $Cv$ collapses under $v$. We assume that $v$ is the shortest possible word with this property.  
Denote $Y_i=Qa_1 \cdots a_i$ for $i \in \{1, \ldots, t-1\}$, and $Y_0=C$. Then, for all $i$, $|Y_i|>1$.  
For each $i \in \{0, \ldots, t-1\}$ choose $Z_i \in \mathcal F$ such that $Y_i \subseteq Z_i$ and $|Z_i a_{j+1} \cdots a_t|=1$.  
We show that if $i<j$, then $Z_i \ne Z_j$.  
Indeed, if $i<j$ and $Z_i=Z_j$, then we have $Ca_1 \cdots a_i = Y_i \subseteq Z_j$, and consequently,  $|Ca_1 \cdots a_i a_{j+1} \cdots a_t|=1$, contradicting the fact that $v$ is the shortest word with this property. Thus,  $t \leq m$, as required. 
\end{proof}

Our main result is the following

\begin{Theorem}\label{main} 
Let $\A = \< Q,\Sigma,\delta \>$ be a strongly connected automaton respecting the intervals of  an scc-dense digraph $D$ on $Q$, and let $\sim$ denote the congruence on $\A$ induced by weakly connected components of $D$. Then, if $\A/\!\!\sim$ satisfies the {\v C}ern{\'y} conjecture, then so does $\A$.  
\end{Theorem}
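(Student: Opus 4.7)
The plan is to combine the hypothesis on $\A/\!\!\sim$ with Lemma \ref{Cfamily} by constructing a Černý family built from the intervals supplied by Lemma \ref{crucial}. Write $n = |Q|$ and let $k$ be the number of weakly connected components of $D$, equivalently the number of $\sim$-classes. The case $k = n$ is immediate, since then $\A = \A/\!\!\sim$; assume $k < n$. By hypothesis, $\A/\!\!\sim$ admits a reset word of length at most $(k-1)^2$; lifted to $\A$ this produces a word $w_0$ with $|w_0| \le (k-1)^2$ sending all of $Q$ into a single weakly connected component $W$ of $D$. Set $C = Qw_0$; the remaining task is to exhibit a word $w_1$ with $|Cw_1| = 1$ and $|w_0 w_1| \le (n-1)^2$.

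The natural candidate Černý family $\mathcal F$ for $C$ is the collection of nonempty intervals $[x,y]$ of $D$ with $x \ne y$ in a common sc-component: whenever an image $Y = Cu$ is contained in a single sc-component, with $|Y| > 1$ and $|Yw| = 1$, Lemma \ref{crucial} supplies distinct $x, y \in Y$ with $Y \subseteq [x,y]$ and $|[x,y]w| = 1$. When $Y$ spans several sc-components of its current weakly connected component, a more delicate treatment is required. Here I would exploit the observation (made right after Lemma \ref{cond}) that the partition into sc-components of $D$ is itself a congruence of $\A$ and the induced quotient acts as a partial-order-preserving map on the acyclic digraph of sc-components. Hence the set of sc-components of $W$ hit by the evolving images $Y_i = C v_1 \cdots v_i$ can only shrink monotonically in this poset, and I would augment $\mathcal F$ with one carefully chosen auxiliary set (a union of sc-components tagged by the current monotonic level) per transition.

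The main obstacle is the careful calibration of $\mathcal F$ so that it covers all transitional configurations of images of $C$ while staying within the size bound $(n-1)^2 - (k-1)^2$. A naive count of intervals gives at most $\sum_j |C_j|(|C_j|-1)$ over sc-components $C_j$ of $W$, which can overshoot the target (already for $k=1$ and $W$ a single sc-component, where $|W|(|W|-1) > (|W|-1)^2$), so the family must be trimmed—for instance, by restricting to intervals sharing one distinguished endpoint, or by using only intervals oriented consistently with the monotonic structure, in the spirit of Eppstein's orientability argument and Volkov's weakly monotonic construction. Combining the pruned interval count with at most $|W|-1$ monotonic auxiliary sets, and using the savings $(k-1)^2$ already absorbed into $w_0$, the total should fit inside $(n-1)^2$. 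The central difficulty I anticipate is proving that the pruned family is still a Černý family: one must verify that every transitional image can indeed be extended to a member of $\mathcal F$ that still collapses, which appears to require a case analysis linking the interval argument of Lemma \ref{crucial} with the monotonic structure on sc-components, in the flavour of Proposition \ref{quotient1}.
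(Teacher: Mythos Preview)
Your outline has the right skeleton --- use the hypothesis on $\A/\!\!\sim$ to land in a single component, then invoke Lemma~\ref{Cfamily} with a \v{C}ern\'y family built from intervals via Lemma~\ref{crucial} --- but the two difficulties you flag are real, and your proposed fixes (``augmenting $\mathcal F$ with one auxiliary set per monotonic level'' and ``trimming'' intervals to a distinguished endpoint) do not work as stated. For the first, a single auxiliary set per poset level is not enough to cover all images $Y=Cu$ that straddle several sc-components: at a given level there are many incomparable configurations of occupied sc-components, and you would need a set in $\mathcal F$ for each, blowing up the count. For the second, restricting intervals to those with a fixed endpoint destroys the conclusion of Lemma~\ref{crucial}: that lemma only guarantees \emph{some} pair $x,y\in Y$ with $Y\subseteq[x,y]$ collapsing, and you have no control over which endpoint appears.

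The paper resolves both problems with one idea you did not try: before invoking the \v{C}ern\'y family, push $Q$ not merely into a single \emph{weakly} connected component but into a single \emph{strongly} connected component of $D$. Writing $N=|Q|$ and $n$ for the number of sc-components of $D$, the paper shows (Claim~1) that a word $v$ of length at most $(n-1)^2$ suffices for this; the argument combines your step $|w_0|\le(k-1)^2$ with a further word of length $\le k-1$ moving to a well-chosen wc-component $B_1$ (the one minimizing the number of extremal sc-components), followed by an iterative construction that repeatedly drives some remaining maximal sc-component down to a minimal one. Once $Qv$ lies in a single sc-component, \emph{every} subsequent image $Y=(Qv)u$ also lies in a single sc-component (since the sc-partition is a congruence), so Lemma~\ref{crucial} applies to every $Y$ and no auxiliary sets are needed. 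The \v{C}ern\'y family is then simply all intervals $[x,y]$ with $x\ne y$ ranging over \emph{all} sc-components of $D$, and the paper bounds its size by $(N-1)^2-(n-1)^2$ via an extremal argument (after disposing of the cases where some sc-component is a singleton). Summing the two stages gives $(N-1)^2$. The moral is that the parameter governing the split is $n$ (sc-components), not $k$ (wc-components); your attempt to make the accounting balance against $(k-1)^2$ is what forces the unworkable trimming.
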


\begin{proof} 
Denote by $n$ the number of strongly connected components, and by $k$ the number of  
weakly connected components (wc-components). If there is a wc-compo\-nent of $D$ consisting of a single vertex, then by Proposition~\ref{quotient1} the theorem is true. Hence we may assume that, each wc-component has at least two vertices.  
Assume also that $\A/\!\!\sim$ is synchronizing and satisfies the {\v C}ern{\'y} conjecture.  
Recall that, since $\A$ respects the intervals, wc-components are mapped by transformations of $\A$ into wc-components, and sc-components are mapped into sc-components.  
We first show that \bigskip 

\noindent \emph{Claim 1.  
There exists a word $v\in\Sigma^*$ of length not exceeding $(n-1)^2$ such that $Qv$ is contained in a single sc-component of $D$.}  
\bigskip 

By assumption on $\A/\!\!\sim$, there is a word $w$ of the length at most $(k-1)^2$ that resets $\A/\!\!\sim$. 
The word $w$ maps all the states of $\A$ into one wc-component $B$ of $D$.  
Since $\A/\!\!\sim$ is strongly connected there is a word $u$ of the length at most $k-1$  mapping  $B$ on any chosen wc-component $B_i$. In particular, if there exists a wc-component consisting of a single sc-component, we are done. Indeed, in such a case $wu$ maps $Q$ into a single sc-component and it has length $$(k-1)^2 + (k-1) = (k-1)k < (n-1)^2.$$ 
(We note that for $k=1$, $w$ is simply the empty word, and the argument works, as well).

Thus, we may assume that every wc-component of $D$ has at least two sc-components. 
The sc-components are partially ordered by $D$, so we may speak of maximal and minimal sc-components. 
Let $B_1, \ldots, B_k$ denote the wc-components of $D$ (this set includes $B$ defined above). Then, by $M_i$ we denote the number of maximal sc-components in $B_i$, and by $m_i$ the number of minimal sc-components of $B_i$. Since each $B_i$ has at least two sc-components, no sc-component is both maximal and minimal. Therefore  $\sum_{i=1}^k (M_i+m_i) \leq n.$ Let us define 
$$p=\min\{M_i,m_i: i \in \{1, \ldots, k\}\}.$$  
Then, $p \leq {n}/{2k}$. To fix attention, we may assume without loss of generality that $p = M_1$, and that the word $u$  
of length at most $k-1$, mentioned above, maps $B$ into $B_1$.

Let $C_1, \ldots, C_p$ denote the maximal sc-components of $B_1$.  
We define inductively a sequence of words $u_0,\ldots,u_t$, with $t\leq p$, and such that $B_1u_t$ is contained in a single sc-component of $D$. Let $u_0$ be the empty word. Given the word $u_i$, we define $J_i$ to be the set of those $j \in \{1, \ldots, p\}$ for which none of the images $C_ju_1, C_ju_2, \ldots, C_ju_i$ is contained in a minimal sc-component of $D$. Further, let $X_i $ be the set of those sc-components of $D$ that contain $C_ju_i$ for some $j \in J_i$. In particular, we have $J_0 = \{1,2,\ldots,p\}$ and $X_0 = \{C_1, \ldots, C_p\}$.  
The word $u_{i+1}$ is now defined as follows.  

We look first for a word $v_{i+1}$ that maps at least one sc-component in $X_i$ into some minimal sc-component of $D$, Such a word exists, since $A$ is strongly connected. We may assume that the length of  
$v_{i+1}$ is at most $n-(m-1)-|X_i|$, where $m=\sum_{j=1}^k m_i$. Indeed, let $I_1,I_2,\ldots,I_r$ be the sequence of images of an sc-component in $X_i$ obtained by applying successive letters of word $v_{i+1}$. Obviously, each such image is contained in an sc-component, so we have a sequence of corresponding sc-components. We may assume that in this sequence no sc-component appears twice, no sc-component of $X_i$ occurs, and except for the last term, no minimal sc-component occurs. This yields the length bound above.  
We define $u_{i+1} = u_iv_{i+1}$. Then obviously $|X_{i+1}| < |X_i|$, unless $X_i$  is empty. It follows that $X_t$ is empty for some $t \leq p$, and consequently, for each $j\in \{1,2,\ldots,p\}$ there exists $i \leq t$ such that $C_ju_i$ is contained in a minimal sc-component of $D$. 
\bigskip 

\noindent \textit{Claim 2. The image $B_1u_t$ is contained in a single sc-component.}\bigskip 

Let $\preceq$ denotes the partial ordering of sc-components in $D$. Since $B_1$ is weakly connected, it is enough to show that for every pair of sc-components $Z,Y \in B_1$ such that $Z$ covers $Y$ in $\preceq$, the images $Zu_t$ and  $Yu_t$ are contained in the same sc-component. To this end, let $C_j \in X_0$ be the maximal sc-component of $B_1$ with $Z \preceq C_j$.  
Now, since for some $i$, $C_ju_i$ is contained in a minimal sc-component, and $\A$ preserves $\preceq$, the images $Zu_i$ and $Yu_i$ are contained in the same minimal sc-component. Since $u_t = w_1u_iw_2$, for some $w_1,w_2$, the claim follows. 
\bigskip

Thus, the word $v=wuu_t$ maps $Q$ into a single sc-component. Let us denote this component by $C$. To complete the proof of {Claim~1} we need to prove now the following. \bigskip

\noindent \textit{Claim 3. The length $|v| \leq (n-1)^2$. }\bigskip

By construction, we have $$ |v| \leq (k-1)^2 + k-1 + \sum_{i=0}^{p-1} (n-m -p+i +1)\leq k(k-1) + pn - kp^2 - \frac{p(p-1)}{2}.$$ (In the later inequality, we have used the fact that $m=\sum_{j=1}^k m_i \geq kp$). 

If $p=1$, then  
$$|v| \leq k(k-2) + n  \leq \frac{n}{2}\left(\frac{n}{2}-2\right) + n \leq (n-1)^2.$$ 
So, we may assume that $p \geq 2$.  
Since $2kp\leq n$, we have  
$$p(n-kp) \leq \frac{n^2}{2k} - \frac{np}{2} \leq \frac{n^2}{2k} - n.$$  
It follows that  
$$|v| \leq k(k-1) + \frac{n^2}{2k} - n - 1.$$ 

Consider the function $f(k)=k(k-1) + ({n^2}/{2k}) - n - 1$.  
We know that $1 \leq k \leq {n}/{2}$, yet we consider it on the interval $[1,n-1]$.  
Since $f'(k)$ is continuous and increasing on the interval $[1,n-1]$, $f(k)$ have the maximal value, in this interval, either for $k=1$ or for $k=n-1$.  
For $k=1$, we have  
$$|v| \leq \frac{n^2}{2} - n - 1 < (n-1)^2.$$ 
For $k=n-1$, we have  
$$|v| \leq (n-1)(n-2)+ \frac{n^2}{2(n-1)} - n - 1.$$ 
Since, $n\geq 2kp \geq 4$, we have $n \leq 2(n-1)$, and consequently 
$$|v| \leq (n-1)(n-2) - 1 < (n-1)^2.$$ 
This completes the proofs of Claim~3 and Claim~1. 
\smallskip

Now, if all the sc-components in $D$ are trivial (one-element), then the proof of the theorem is finished.  
Otherwise, let $N$ denote the number of vertices in $D$ (states in $A$). Then $N>n$. 
If there is at least one trivial sc-component consisting of a single state $q$, then there is a word $v_0$ of length $\leq (n-1)$ mapping the sc-component $C$ containing $B_1u_t$ onto $q$. Consequently, the word $vv_0$ resets $A$ and it has the length not exceeding 
$$(n-1)^2 + (n-1) = n(n-1) < (N-1)^2,$$ 
as required. Hence, we may assume that each sc-component contains at least two vertices. Then $N\geq 2n$.

We show that there exists a small enough {\v C}ern{\'y} family of sets for $C$. Suppose that $Y=Cu$ is an image of $C$ by a word $u$. Then, $Y$ is contained in some sc-component of $D$. 
Let $w$ be such that $|Yw|=1$. By Lemma~\ref{crucial}, there exist $x,y\in Y$ such that $Y \subseteq [x,y]$ and $[x,y]$ collapses under $w$. This means that the family $\mathcal F$ of sets $[x,y]$ with $x\neq y$ belonging to the same sc-component in $D$ forms a {\v C}ern{\'y} family of sets for $C$. We estimate the cardinality $m =|\mathcal F|$ of this family.  

Let $c_i = |C_i|$. Then $m =  \sum_{i=1}^n c_i(c_i -1)$. Clearly, for a fixed $n$ and $N$, this sum has the maximal value when $c_1$ is as large as possible and the remaining values of $c_i$ are as small as possible. Since, by assumption $c_i > 1$, we have 
$$m =  \sum_{i=1}^n c_i(c_i -1) \leq (N-2(n-1))(N-2n+1) + 2(n-1),$$ 

and consequently, 

$$m  \leq N^2-4Nn+4n^2+3N-4n.$$
   
This may be written as
 
$$m \leq (N-1)^2 - (n-1)^2 -(4Nn-5n^2-5N+6n).$$ 

The latter term in parentheses equals 
$$ (3N-5n)(n-1)+Nn - 2N +n, $$

and since $N\geq 2n$ and $n>1$, it is positive. Therefore, 

$$m \leq (N-1)^2 - (n-1)^2.$$ 

Combining Lemma~\ref{Cfamily} with Claim~1 completes the proof. 
\end{proof} 

\section{Corollaries and applications}

First note that if an scc-dense graph $D$ on $Q$ is weakly connected, then  $\A/\!\!\sim$ in Theorem~\ref{main} has one element and trivially satisfies the {\v C}ern{\'y} conjecture (see a remark at the beginning of the proof of Theorem~\ref{main} concerning the case $k=1$). Therefore we have the following 

\begin{Cor}
If a strongly connected automaton $\A = \< Q,\Sigma,\delta \>$ respects the intervals of an scc-dense weakly connected digraph $D$ on $Q$, then $\A$ satisfies the {\v C}ern{\'y} conjecture.  
\end{Cor}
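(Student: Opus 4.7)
The plan is to observe that this corollary is essentially an immediate specialization of Theorem~\ref{main}, and essentially just needs to verify the hypothesis on $\A/\!\!\sim$ in the degenerate case $k=1$. Concretely, I would proceed as follows.

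First, I would note that since $D$ is weakly connected, the partition of $Q$ into weakly connected components consists of a single block, namely $Q$ itself. Hence the congruence $\sim$ on $\A$ induced by weakly connected components of $D$ has exactly one equivalence class, and the quotient automaton $\A/\!\!\sim$ has a single state.

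Next, I would observe that any one-state automaton trivially satisfies the {\v C}ern{\'y} conjecture: the empty word $\varepsilon$ is a reset word, and $|\varepsilon|=0\le (1-1)^2$. This is exactly the remark made at the beginning of the proof of Theorem~\ref{main} handling the case $k=1$.

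Finally, since $\A$ is strongly connected and respects the intervals of the scc-dense digraph $D$, and since $\A/\!\!\sim$ satisfies the {\v C}ern{\'y} conjecture, Theorem~\ref{main} directly yields that $\A$ itself satisfies the {\v C}ern{\'y} conjecture, completing the proof. There is really no hard step here; the main work has already been absorbed into Theorem~\ref{main}, and the corollary merely points out that weak connectedness of $D$ automatically discharges the hypothesis on the quotient.
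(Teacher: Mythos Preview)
Your proposal is correct and matches the paper's own argument essentially verbatim: the paper simply notes that weak connectedness of $D$ forces $\A/\!\!\sim$ to be a one-state automaton, which trivially satisfies the {\v C}ern{\'y} conjecture, and then invokes Theorem~\ref{main}. There is nothing to add.
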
 

Obviously, a (simple directed) cycle is scc-dense, and so, as a particular case of the corollary above, we obtain Eppstein's result \cite{Epp} that oriented automata (called in \cite{Epp} \emph{monotonic}) satisfy the  {\v C}ern\'{y} conjecture. The following is a natural generalization of the Eppstein result.

Let us call a directed graph $D$ a digraph with \emph{unique return paths}, if whenever there is a path from $x$ to $y$ in $D$, and there is a return path from $y$ to $x$, then the latter is unique, and in consequence the former is also unique. Since we allow on paths repeated occurrences of vertices (other than endvertices), it means that on each of the paths every next step is uniquely determined until we reach the endvertex. This leads easily to the conclusion that every two cycles in $D$ are disjoint. Thus, a digraph with unique return paths consists of a collection of disjoint cycles and, possibly, additional edges between them inducing a partial ordering on the cycles. In a sense, this is a class of digraphs next to the acyclic digraphs (and thus a natural generalization of simple cycles). Note that preserving intervals of $D$ in this case means that cycles are mapped into cycles, orientation of vertices on each cycle is preserved and the existence of a directed path between cycles is preserved. We abbreviate it saying that the induced partial order and orientation of cycles are preserved. Then we may formulate the following simple generalization of Eppstein result.

\begin{Cor}
Let $D$ be a weakly connected digraph on $Q$ with unique return paths. If a strongly connected automaton $\A = \< Q,\Sigma,\delta \>$ preserves the partial ordering on the cycles induced by $D$ and orientation of the cycles, then $\A$ satisfies the {\v C}ern{\'y} conjecture.  
\end{Cor}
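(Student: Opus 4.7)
The plan is to deduce this corollary from the preceding one by verifying, under the given hypotheses, that $D$ is scc-dense and that $\A$ respects the intervals of $D$; weak connectivity of $D$ and strong connectivity of $\A$ are part of the assumption, so no more is needed.

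First I would describe the sc-components of $D$. Because every return path is unique, distinct cycles of $D$ are vertex-disjoint, so each sc-component of $D$ is either a single directed cycle or a singleton vertex. A simple directed cycle is scc-dense: for any two distinct vertices $x,y$ on a cycle $C$, the intervals $[x,y]$ and $[y,x]$ are exactly the two directed arcs of $C$, and every remaining vertex of $C$ lies on one of them. Singleton sc-components are scc-dense vacuously, so $D$ is scc-dense.

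Next I would verify the three conditions (i)--(iii) defining respecting the intervals of $D$. Condition~(i), preservation of the existence of a directed path, is immediate from preservation of the partial order on cycles together with orientation within each cycle. For~(ii), the hypothesis forces $x$ and $y$ to lie on a common cycle $C$, and orientation preservation sends the arc $[x,y]$ of $C$ into the arc from $xa$ to $ya$ in the image cycle, which is contained in $[xa,ya]$. For~(iii), if $x$ and $y$ belong to distinct sc-components then at most one of $[x,y]$ or $[y,x]$ is nonempty, so the conclusion is trivial; if $x,y$ lie on a common cycle $C$ and $xa=ya$, then orientation preservation forces the image of $C$ to wind exactly once around $Ca$, so one of the two directed arcs between $x$ and $y$ must be collapsed to $\{xa\}$ while the other covers the whole image cycle.

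With scc-density and respect of intervals established, the preceding corollary applies directly and yields the \v{C}ern\'y bound for $\A$. The main subtlety is the reading of ``orientation preserving'' for cycle maps of possibly different lengths in condition~(iii): the natural interpretation as a degree-one monotonic map of the source cycle onto its image cycle is exactly what guarantees that whenever two vertices of a cycle are identified by $a$, one of the two directed arcs between them is collapsed to a single point. Once this is made explicit, the rest of the verification is routine.
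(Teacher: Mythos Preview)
Your proposal is correct and follows exactly the approach the paper intends: the corollary is stated without a separate proof, relying on the preceding discussion that for a digraph with unique return paths the sc-components are disjoint simple cycles (hence $D$ is scc-dense) and that ``preserving the partial order on cycles and the orientation of each cycle'' is precisely what respecting the intervals of $D$ amounts to; you have simply spelled out the verification of (i)--(iii) that the paper leaves implicit. Your flagging of the one genuine subtlety---that ``orientation preserving'' for a map between cycles of different lengths must be read as a degree-one monotone (interval-preserving) map in order for condition~(iii) to go through---is apt and matches the Eppstein-style interpretation the paper tacitly assumes.
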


Now, it is easy to construct examples of automata (using, for instance, {\v C}ern\'{y} automata as components) that are neither orientable nor weakly monotonic and satisfy the {\v C}ern\'{y} conjecture due to the corollary above.

We proceed to show that our results also include all weakly monotonic automata. This leads to a different (and perhaps simpler) way to handle weakly monotonic automata and to a generalization of these constructions. First let us recall the definition from \cite{Vol1}. Given an automaton $\A = \< Q,\Sigma,\delta \>$, a binary relation $\rho$ on  $Q$ is \emph{stable} if $(p,q)\in\rho$ implies $(pa,qa)\in\rho$ for all $p,q\in Q$ and $a\in\Sigma$. Then,  $\A$ is called \emph{weakly monotonic of level} $\ell\geq 1$ if there exists a strictly increasing chain of stable binary relations 
$$\rho_0 \subset \rho_1 \subset \ldots \subset \rho_{\ell}, $$
such that $\rho_0$ is the equality relation, the transitive closure of $\rho_{\ell}$ is the universal relation, and for each $i=0,\ldots,\ell-1$ the following condition is satisfied:

\begin{itemize} 
\item[($*$)]  the transitive closure $\pi_{i}$ of $\rho_{i}$ is contained in $\rho_{i+1}$ and the relation $\rho_{i+1}/\pi_{i}$ induced by $\rho_{i+1}$ on the set of equivalence classes $Q/\pi_{i}$ is a partial ordering. 
\end{itemize}

Note that the equivalence classes of the transitive closure of a partial ordering relation  $\tau$ are just the weakly connected components of any digraph $D$ inducing $\tau$. Recall that if $\A= \< Q,\Sigma,\delta \>$ preserves the partial order $\tau$ (defined on $Q$) then the weakly connected components of $\tau$ form a congruence on $\A$. It is natural to write simply $\A/\tau$ to denote the corresponding quotient automaton. Then, we may define weakly monotonic automata by recursion as follows. 

\begin{itemize} 
\item[{\rm (i)}] The trivial automata with one state are weakly monotonic of level $0$; 
\item[{\rm (ii)}] for $\ell>0$, $\A = \< Q,\Sigma,\delta \>$ is a weakly monotonic of level $\ell$, if 
there exists a partial order $\tau$ on $Q$, such that $\A/\tau$ is a weakly monotonic automaton of level $\ell-1$.
\end{itemize} 

$\A = \< Q,\Sigma,\delta \>$ is a weakly monotonic automata of level $1$, if 
if there exists a \emph{weakly connected} partial order $\tau$ on $Q$ preserved by $\A$; 

It should be clear that in both the definitions the classes of weakly monotonic automata of level not exceeding $\ell$ coincide. In particular, the weakly monotonic automata of level $1$ are just the automata preserving a \emph{connected} partial order.
The fact that each strongly connected weakly monotonic automaton of level $\ell$ satisfies the {\v C}ern{\'y} conjecture follows now by $\ell$-fold application of Theorem~\ref{main}. We note, that Volkov \cite{Vol1} proved this result for all (not necessarily strongly connected) weakly monotonic automata. This can be also obtained using the proof of Theorem~\ref{main} (since the assumption that the automaton in question is strongly connected is necessary only for the case when $D$ has cycles). It should be noted however, that in \cite{Vol1}, for strongly connected automata a stronger bound for a reset word is established. It has been also observed in \cite{Vol1} that Trahtman \cite{Tra1} proved that each aperiodic automaton preserves a nontrivial partial order, and therefore each such automaton is weakly monotonic (the latter can be seen immediately from our recursive definition).  

These observations may be generalized for strongly connected automata as follows. For an automaton 
$\A= \< Q,\Sigma,\delta \>$ respecting the intervals of a digraph $D$, by $\A/D$ we denote the quotient automaton of the congruence induced by the weakly connected components of $D$. For an arbitrary class of strongly connected automata $\mathcal C$, we define recursively the class $I_{\ell}({\mathcal C})$: 

\begin{itemize} 
\item[{\rm (i)}] $I_0(\mathcal C) = {\mathcal C}$; 
\item[{\rm (ii)}] for each $\ell > 1$,  $I_{\ell}{(\mathcal C)}$ is the class of all strongly connected automata $\A=\< Q,\Sigma,\delta \> $ such that, for some digraph $D$ on $Q$, $\A$ respects the intervals of $D$ and $\A/D \in I_{\ell-1}(\mathcal C)$. 
\end{itemize}

Our main result Theorem~\ref{main} yields

\begin{Cor}
If $\mathcal C$ is a class of strongly connected automata satisfying the {\v C}ern\'{y} conjecture, then every automaton $\A\in I_{\ell}(\mathcal C)$, for some $\ell \geq 0$, satisfies the {\v C}ern\'{y} conjecture.
\end{Cor}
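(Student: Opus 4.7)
The plan is to proceed by straightforward induction on $\ell$, using Theorem~\ref{main} as the workhorse of the inductive step. All of the genuine content has already been invested into Theorem~\ref{main} and its supporting lemmas, so this corollary is essentially a bookkeeping argument that packages the iterated application of the reduction.

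For the base case $\ell = 0$ I would simply unfold the definition: $I_0(\mathcal C) = \mathcal C$, so any $\A \in I_0(\mathcal C)$ satisfies the {\v C}ern\'{y} conjecture by the hypothesis on $\mathcal C$. For the inductive step with $\ell \geq 1$, assume the claim for $\ell - 1$ and take $\A = \langle Q,\Sigma,\delta\rangle \in I_\ell(\mathcal C)$. By the recursive definition there is an (scc-dense) digraph $D$ on $Q$ such that $\A$ respects the intervals of $D$ and $\A/D \in I_{\ell - 1}(\mathcal C)$. Since $\A$ is strongly connected by definition of $I_\ell(\mathcal C)$, its quotient $\A/D$ is strongly connected as well; consequently the inductive hypothesis applies to $\A/D$, yielding that $\A/D$ satisfies the {\v C}ern\'{y} conjecture. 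An application of Theorem~\ref{main} to $\A$ and $D$ then lifts this to $\A$ itself.

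The main obstacle is, frankly, not one of proof but of interpretation: one must read the recursive definition of $I_\ell(\mathcal C)$ so that the digraphs $D$ involved are scc-dense, matching the hypothesis of Theorem~\ref{main}; otherwise the reduction cannot be invoked. Once this is agreed, the argument collapses into two lines of induction, and no additional estimates or constructions are required beyond what has already been established in the paper.
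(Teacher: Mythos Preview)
Your induction on $\ell$ is exactly what the paper intends: it does not give a proof at all, writing only that ``Our main result Theorem~\ref{main} yields'' the corollary, and earlier remarks that the weakly monotonic case ``follows now by $\ell$-fold application of Theorem~\ref{main}.'' Your observation about the scc-dense hypothesis is well taken: the recursive definition of $I_\ell(\mathcal C)$ in the paper omits it, but Theorem~\ref{main} requires it, so the definition must be read with that restriction implicit for the corollary to follow.
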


In particular, if $\mathcal C_0$ consists of one-element automata, then the class $I_{\ell}(\mathcal C_0)$ contains, in particular, all strongly connected weakly monotonic automata of level $\ell$. Yet, one may start from a broader class of automata $\mathcal C$ for which the {\v C}ern\'{y} conjecture has been already verified. Then, one may easily construct many new examples of strongly connected automata not covered by the results on the {\v C}ern\'{y} conjecture established so far.


\begin{thebibliography}{99}


\bibitem{AMSV} J. Almeida, S. W. Margolis, B. Steinberg, and M. V. Volkov, Representation theory of finite semigroups, semigroup radicals and formal language theory 
Trans. Amer. Math. Soc. 361 (2009), 1429-1461.


\bibitem{AV} D. S. Ananichev, M. V. Volkov, {Synchronizing monotonic automata}, Theor. Comput. Sci. 327, no.1 (2004), 225-239 (DLT 2003, LNCS 2710 (2003), 111-121).

\bibitem{AV1} D. S. Ananichev, M. V. Volkov, Synchronizing generalized monotonic automata, Theor. Comput. Sci. 330(1) (2005), 3-13.


\bibitem{Cer} J. {\v C}ern{\'y}. \emph{Pozn{\'a}mka k. homog\'ennym
experimentom s konecn{\'y}mi automatmi}, Mat. fyz. {\v c}as SAV 14 (1964) 208-215.

\bibitem{Dub} L. Dubuc, Les automates circulaires et la conjecture de {\v C}ern\'{y}, Inform. Theor. Appl. 32 (1998), 21-34.

\bibitem{Epp} D. Eppstein, Reset sequences for monotonic automata, SIAM J. Comput. 19 (1990) 500-510.

               
\bibitem{Kar} J. Kari. {Synchronizing finite automata on Eulerian digraphs},
Theor. Comput. Sci. 295 (2003), 223-232 (MFCS 2001, LNCS 2136 (2001), 432-438).


\bibitem{MS} A. Mateescu, A. Salomaa, {Many-valued truth functions,
\v{C}ern\'{y}'s conjecture and road coloring}, EATCS Bull. 68(1999),
134-150.


\bibitem{Pin} J.-E. Pin, Sur un cas particulier de la conjecture de {\v C}ern{\'y}, Lect.
Notes Comput. Sci. 62 (1978), 345–352.  

\bibitem{Ste} B. Steinberg, The {\v C}ern{\'y} conjecture for one-cluster automata with prime length cycle, Theoret. Comput. Sci. 412(39) (2011), 5487-5491.





\bibitem{Tra} A. N. Trahtman, An efficient algorithm finds noticeable trends and examples concerning the {\v C}ern{\'y} conjecture, MFCS 2006, Lect. Notes Comput. Sci. 4162 (2006),789-800. 
 

\bibitem{Tra1} A. N. Trahtman, The {\v C}ern{\'y} Conjecture for Aperiodic Automata,   DMTCS 9(2) (2007), 3-10. 



\bibitem{Vol} M. V. Volkov, Synchronizing automata and the {\v C}ern{\'y} conjecture, Lect. Notes in Comput. Sci. 5196 (2008), 11–27.

\bibitem{Vol1} M. V. Volkov, Synchronizing automata preserving a chain of partial orders, Theor. Comput. Sci. 410(37) (2009), 3513-3519  (CIAA 2007, LNCS 4783 (2007), 27-37).

\end{thebibliography}
\end{document}